\documentclass[letterpaper, 10 pt, conference]{ieeeconf}  

\IEEEoverridecommandlockouts                              
\usepackage{graphicx} 
\usepackage{amsmath} 
\usepackage{amssymb}  
\usepackage{algorithm}
\usepackage[noend]{algpseudocode}
\usepackage{dsfont}
\newcommand{\Given}[1]{\Statex \textbf{Given:} #1}
\newcommand{\Input}[1]{\Statex \textbf{Input:} #1}
\newcommand{\Output}[1]{\Statex \textbf{Output:} #1}

\usepackage{amsthm}
\usepackage{multirow}
\usepackage{array}
\usepackage[hidelinks]{hyperref}

\theoremstyle{plain}
\newtheorem{theorem}{Theorem}
\newtheorem{proposition}{Proposition}
\theoremstyle{remark}
\newtheorem{remark}{Remark}
\theoremstyle{definition}
\newtheorem{definition}{Definition}

\title{\LARGE \bf
Sampling-Based Safety Filter\\with Probabilistic Restrictiveness Guarantee
}

\author{Junyoung Park, Hyeontae Sung and Heejin Ahn
\thanks{This work was partly supported by the IITP (Institute of Information \& Communications Technology Planning \& Evaluation)-ITRC (Information Technology Research Center) grant funded by the Korea government (MSIT) (IITP-2026-RS-2023-00259991, 50\%), and Basic Science Research Program through the National Research Foundation of Korea (NRF) funded by the Ministry of Education (RS-2025-25407010, 50\%).}
\thanks{The authors are with the School of Electrical Engineering,
        Korea Advanced Institute of Science and Technology (KAIST),
        Daejeon, Republic of Korea
        {\tt\small \{junyoung766, hyeontae.sung, heejin.ahn\}@kaist.ac.kr}}%
}

\begin{document}

\maketitle
\thispagestyle{empty}
\pagestyle{empty}

\begin{abstract}

Ensuring safety is a critical requirement for autonomous systems, yet providing formal guarantees for nominal controllers remains a significant challenge. In this paper, we propose a modular sampling-based safety filter to ensure the safety of arbitrary nominal control inputs. At each timestep, the filter evaluates the safety of the nominal input by leveraging control sequence samples generated via Stein Variational Model Predictive Control (SV-MPC). This approach approximates a safety-conditioned posterior distribution over control sequences, enabling the filter to effectively capture multimodal safe regions in complex, non-convex environments.
The filter guarantees safety by overriding the nominal input when all sampled control sequence candidates are deemed unsafe. By leveraging the scenario approach, the proposed method provides a probabilistic guarantee on its restrictiveness. We validate the filter through collision avoidance tasks in both single- and multi-vehicle settings, demonstrating its efficacy in navigating cluttered environments where nominal controllers may fail.

\end{abstract}

\section{INTRODUCTION}

In safety-critical control systems such as autonomous vehicles, control inputs must be generated with rigorous safety assurances to prevent catastrophic failures. However, nominal controllers, often designed via reinforcement learning or heuristic methods, are typically optimized for performance rather than safety. Providing formal guarantees for such policies remains a significant challenge, necessitating an additional supervisory mechanism to ensure safe operation. Safety filters offer a practical, modular solution by monitoring the nominal controller’s inputs and intervening only when necessary to enforce safety constraints.

Safety filters can be categorized by their certification mechanism.
Some rely on precomputed certificates, such as Hamilton–Jacobi (HJ) reachability~\cite{bansal2017hamilton, borquez2024safety} or control barrier functions (CBF)~\cite{ames2016control, wang2017safety}. While these approaches are computationally efficient at runtime, HJ value function computation scales poorly with system dimensions, and CBF synthesis remains a nontrivial challenge. Alternatively, methods based on model predictive control (MPC)~\cite{zeng2021safety, kopp2024data} verify safety online by predicting future trajectories over a finite horizon. While these eliminate the need for explicit certificate construction, gradient-based formulations often struggle with non-smooth objectives, non-convex constraints, and numerical infeasibility.

In this paper, we propose a sampling-based safety filter that operates entirely online. Our method is inspired by sampling-based MPC algorithms, such as model predictive path integral (MPPI)~\cite{williams2017information}, the cross-entropy method (CEM)~\cite{kobilarov2012cross}, and Stein variational MPC (SV-MPC)~\cite{lambert2020stein}, which are particularly effective in non-differentiable or non-convex settings. Our method evaluates the nominal input by approximating a safety-conditioned posterior distribution over future control sequences and sampling from it. These sampled sequences are propagated through discrete-time dynamics and evaluated via a level function encoding the safety constraints. By maintaining a candidate safe sequence as a backup at each timestep, the filter intervenes whenever all sampled trajectories fail to satisfy the safety requirements.

Our approach differs from prior work in two key respects. First, we leverage SV-MPC to capture the multimodal structure of safe control distributions. In contrast, existing sampling-based safety filters~\cite{park2025safety,feng2025words} typically draw control sequences from a unimodal Gaussian distribution, which limits their ability to effectively represent disjoint or non-convex safe regions. Second, we provide a probabilistic guarantee on the filter’s restrictiveness by leveraging the scenario approach~\cite{campi2009scenario}.
In particular, the filter overrides the nominal input only when it is unlikely that a safe future control sequence exists under the sampling distribution.
To the best of our knowledge, this is the first work to provide a formal probabilistic guarantee on the restrictiveness of a sampling-based safety filter under a finite number of samples. 

The contributions of this paper are as follows:
\begin{itemize}
    \item We propose a sampling-based safety filter that provides a probabilistic guarantee on its restrictiveness using the scenario approach.
    \item We incorporate Stein variational model predictive control (SV-MPC) into the filter to represent the multimodal safety-conditioned posterior distribution over control sequences.
    \item We validate the proposed filter on collision avoidance tasks in both single- and multi-vehicle scenarios with different nominal controllers.
\end{itemize}

This paper is organized as follows. Section~\ref{sec:prob} formulates the safety filtering problem, and Section~\ref{sec:ssf} presents the proposed safety filter, including the filter algorithm, the construction of the sampling distribution, and the theoretical properties of the filter.
Section~\ref{sec:exp} demonstrates the proposed method in single- and multi-vehicle scenarios.
Finally, Section~\ref{sec:conclusion} concludes the paper.
\section{PROBLEM STATEMENT} \label{sec:prob}

Consider a discrete-time dynamical system $\mathbf{x}_{t+1}=f(\mathbf{x}_t,\mathbf{u}_t)$, where $\mathbf{x}_t\in\mathcal{X}$ denotes the state and $\mathbf{u}_t\in\mathcal{U}$ denotes the control input at time $t$. Let $l:\mathcal{X}\rightarrow\mathbb{R}$ be a level function whose subzero level set defines the failure set
\begin{equation} \label{eq:failure_set}
    \mathcal{L}:=\{\mathbf{x}:l(\mathbf{x})\le0\}.
\end{equation}
The failure set represents unsafe states such as collisions or constraint violations. Therefore, ensuring \textit{safety} means that the system state remains outside $\mathcal{L}$.

For a finite horizon $H$, let $U_t:=(\mathbf{u}_t,\cdots,\mathbf{u}_{t+H-1})$ denote the control sequence starting at time $t$, and $X_t:=(\mathbf{x}_t,\cdots,\mathbf{x}_{t+H})$ the corresponding state sequence. Since the dynamics are deterministic, $X_t$ is uniquely determined by $\mathbf{x}_t$ and $U_t$. We define an auxiliary binary random variable $\mathcal{O}_{\tau_t}\in\{0,1\}$ for the state-action trajectory $\tau_t=(X_t,U_t)$ as
\begin{equation}\label{eq:safety}
    \begin{aligned}
    \mathcal{O}_{\tau_t} =
    \begin{cases}
        1,\quad &\mathbf{x}_{t+k}\notin\mathcal{L},\;\forall k\in[0,H],\\
        0,\quad &\text{otherwise}.
    \end{cases}
    \end{aligned}
\end{equation}
This means that $\mathcal{O}_{\tau_t} = 1$ if the trajectory $\tau_t$ is safe over the horizon $H$, and $\mathcal{O}_{\tau_t} = 0$ otherwise.

At each timestep $t$, a nominal controller generates a control input $\mathbf{u}^\mathrm{nom}_t\in \mathcal{U}$, which may be designed for arbitrary objectives and may not explicitly account for safety constraints. 
Therefore, directly applying $\mathbf{u}^\mathrm{nom}_t$ does not in general guarantee avoidance of $\mathcal{L}$.

In this paper, we design a sampling-based safety filter that samples $N$ candidate control sequences at each time step to assess the safety of the nominal input. Specifically, given the current state $\mathbf{x}_t$ and a nominal input $\mathbf{u}_t^{\mathrm{nom}}$, we first compute the next state $\mathbf{x}_{t+1} = f(\mathbf{x}_t, \mathbf{u}_t^{\mathrm{nom}})$. We then sample $N$ candidate control sequences $U_{t+1}^i$, $i \in [1, N]$, and propagate each sequence from $\mathbf{x}_{t+1}$ to generate a trajectory $\tau_{t+1}^i$. The safety filter outputs the control input $\mathbf{u}_t^{\mathrm{safe}}$ as
\begin{equation}\label{eq:u_safe}
\mathbf{u}_t^{\mathrm{safe}} =
\begin{cases}
\mathbf{u}_t^{\mathrm{nom}}, & \exists\, i\in[1,N]\;\text{s.t.}\;\mathcal{O}_{\tau^i_{t+1}}=1, \\
\mathbf{u}_t^{\mathrm{backup}}, & \text{otherwise},
\end{cases}
\end{equation}
where $\mathbf{u}_t^{\mathrm{backup}}$ denotes a known safe control input. 
The $N$ control sequences are sampled from a distribution $\tilde q_{t+1}(U)$.
The construction of $\tilde q_{t+1}(U)$ is described in Section~\ref{subsec_svmpc}.

We aim to design a safety filter \eqref{eq:u_safe} that satisfies the following two properties:
\begin{itemize}
    \item \textbf{Safety}: 
    For any $\mathbf{x}_{t}\notin\mathcal{L}$, there exists a control sequence $U_{t+1}$ such that from $\mathbf{x}_{t+1}=f(\mathbf{x}_t, \mathbf{u}_t^{\mathrm{safe}})$, the trajectory $\tau_{t+1}$ is safe, i.e., $\mathcal{O}_{\tau_{t+1}}=1$.
    If this holds, we say that \textit{the filter guarantees safety at time $t$}.
    \item \textbf{$\epsilon$-restrictiveness}: The filter overrides the nominal input only when it is unlikely that a safe future control sequence exists. Formally, an intervention implies that the probability of drawing a safe sequence from the  sampling distribution $\tilde q_{t+1}(U)$ is bounded by $\epsilon$, that is,
    \begin{equation*}
    \text{if } \mathbf{u}^{\mathrm{safe}}_{t} \neq \mathbf{u}_{t}^{\mathrm{nom}}, \text{ then } \Pr_{U_{t+1} \sim \tilde q_{t+1}}(\mathcal{O}_{\tau_{t+1}}=1) \le \epsilon.
    \end{equation*}
\end{itemize}


In the definition of $\epsilon$-restrictiveness, $\epsilon\in(0,1)$ denotes a restrictiveness parameter, and the uncertainty in $\mathcal{O}_{\tau_{t+1}}$ arises from the sampled control sequence $U_{t+1}\sim \tilde q_{t+1}$.
Smaller $\epsilon$ leads to less restrictive interventions.
The least-restrictive case corresponds to $\epsilon=0$, where intervention occurs only when no safe trajectory exists after applying $\mathbf{u}^\mathrm{nom}_t$.

\section{Sampling-Based Safety Filter} \label{sec:ssf}

In this section, we present the proposed sampling-based safety filter in detail. We first describe the overall algorithm and the construction of the sampling distribution. We then establish the safety guarantee and restrictiveness of the proposed filter.

\subsection{Safety Filter Algorithm}

\begin{algorithm}
\caption{Sampling-Based Safety Filter} \label{alg:1}
\begin{algorithmic}[1]
\Given{Trajectory cost function $C(\cdot)$}
\Input{Current state $\mathbf{x}_t$, nominal input $\mathbf{u}^\mathrm{nom}_t$, safe control sequence $U^\mathrm{safe}_t=(\bar{\mathbf{u}}_{t},\cdots, \bar{\mathbf{u}}_{t+H-1})$}
\Output{Safe input $\mathbf{u}^\mathrm{safe}_t$, safe control sequence $U^\mathrm{safe}_{t+1}$}
\State $\mathbf{x}_{t+1} \leftarrow f(\mathbf{x}_t, \mathbf{u}^\mathrm{nom}_t)$
\For{$i \leftarrow 1$ to $N$ \textbf{in parallel}}
    \State Sample $U^i_{t+1} \sim\tilde q_{t+1} \quad \triangleright$  Algorithm~\ref{alg:2}
    \State $X^i_{t+1} \leftarrow \text{Rollout}(\mathbf{x}_{t+1}, f(\cdot), U^i_{t+1})$
\EndFor
\State $i^* \leftarrow {\text{arg\,min}}_i\;C(\tau^i_{t+1})$
\If{$C(\tau^{i^*}_{t+1})<0$}
    \State $\mathbf{u}^\mathrm{safe}_t \leftarrow \mathbf{u}^\mathrm{nom}_t,\; U^\mathrm{safe}_{t+1} \leftarrow U^{i^*}_{t+1}$
\Else
    \State $\mathbf{u}^\mathrm{safe}_t \leftarrow \bar{\mathbf{u}}_{t},\; U^\mathrm{safe}_{t+1} \leftarrow \text{Shift}(U^\mathrm{safe}_t)$
\EndIf
\State \Return $\mathbf{u}^\mathrm{safe}_t, U^\mathrm{safe}_{t+1}$
\end{algorithmic}
\end{algorithm}

Our safety filtering algorithm is summarized in Algorithm~\ref{alg:1}. 
At each timestep $t$, the algorithm first predicts the next state $\mathbf{x}_{t+1}$ using the nominal control input $\mathbf{u}^\mathrm{nom}_t$ (line 1). 
It then samples $N$ control input sequences $U^i_{t+1}\;(i=1,\cdots,N)$ of length $H$ and rolls them out from $\mathbf{x}_{t+1}$ to generate $N$ corresponding state trajectories $X^i_{t+1}\;(i=1,\cdots,N)$ (lines 2--4). 
The safety of each trajectory is evaluated using the cost function
\begin{equation} \label{eq:cost}
C(\tau^i_{t+1}):=\max_{k\in[0,H]}\left\{-l(\mathbf{x}^i_{t+1+k})\right\},
\end{equation}
where $C(\tau_{t+1}^i)<0$ for safe trajectories and $C(\tau_{t+1}^i)\ge0$ otherwise.
In our setting, the cost depends only on the state trajectory $X_{t+1}^i=(\mathbf{x}_{t+1}^i,\cdots,\mathbf{x}_{t+H+1}^i)$; however, we use the state-action trajectory notation $\tau$ for generality.

The filter then selects the control sequence with the minimum cost among all samples (line 5).
If the minimum cost is negative, the nominal input $\mathbf{u}^\mathrm{nom}_t$ is applied without intervention, and $U^\mathrm{safe}_{t+1}$ is updated to the minimum-cost sample $U^{i^*}_{t+1}$ (lines 6--7). 
Otherwise, the filter applies the backup safe input $\mathbf{u}^\mathrm{safe}_t$ from the previously stored safe sequence $U^\mathrm{safe}_t$, and constructs $U^\mathrm{safe}_{t+1}$ by shifting $U^\mathrm{safe}_t$ forward by one step (lines 8--9). 
The last input of the shifted sequence is chosen to drive the system toward, or keep it inside, a safe control invariant set. Details are provided in Section~\ref{subsec_safety-guarantee}. 
The applied safe input $\mathbf{u}^\mathrm{safe}_t$ is then executed, and the updated safe sequence $U^\mathrm{safe}_{t+1}$ is stored for the next timestep (line 10).

\subsection{Sampling Distribution}\label{subsec_svmpc}

In line 3 of Algorithm~\ref{alg:1}, we draw samples to evaluate the safety of the nominal input $\mathbf{u}^\mathrm{nom}_t$. 
The efficacy of the safety filter depends on its ability to thoroughly explore the safe control space. Because the filter intervenes only when it fails to identify at least one safe candidate, a more comprehensive search directly reduces the frequency of unnecessary interventions. Furthermore, any safe sequence identified during this exploration serves as a potential safe backup $U^\mathrm{safe}_{t+1}$ for fallback control in subsequent timesteps.

To this end, we define a safety-conditioned posterior $p_{t+1}(U|\mathcal{O}_{\tau_{t+1}}=1)$. In this framework, we treat the search for safe control sequences as a Bayesian inference problem. 
By Bayes' rule, the safety-conditioned posterior is given by
\begin{equation} \label{eq:posterior}
    \begin{aligned}
        p_{t+1}(U|\mathcal{O}_{\tau_{t+1}}=1)=\frac{p_{t+1}(\mathcal{O}_{\tau_{t+1}}=1|U)p_{t+1}(U)}{p_{t+1}(\mathcal{O}_{\tau_{t+1}}=1)}.
    \end{aligned}
\end{equation}
For brevity, we write $\mathcal{O}_{\tau_{t+1}}$ instead of $\mathcal{O}_{\tau_{t+1}}=1$ whenever the meaning is clear.

Since the posterior is generally intractable, we approximate $p_{t+1}(U|\mathcal{O}_{\tau_{t+1}})$ using variational inference. 
Specifically, we seek a distribution $q^*(U)$ in a tractable family $Q$ that minimizes the Kullback-Leibler (KL) divergence to the target posterior:
$$q^*=\arg\min_{q\in Q}D_{KL}(q(U)||p_{t+1}(U|\mathcal{O}_{\tau_{t+1}})).$$
Using \eqref{eq:posterior}, this optimization can be rewritten as
\begin{equation} \label{eq:q_star}
    \begin{aligned}
        q^*=\arg\min_{q\in Q}\bigl\{-&\mathbb{E}_q\left[\log p_{t+1}(\mathcal{O}_{\tau_{t+1}}|U)\right]\\
        &+D_{KL}(q(U)\Vert p_{t+1}(U))\bigr\}.
    \end{aligned}
\end{equation}
To evaluate the safety-conditioned likelihood, we define a non-negative cost-likelihood function $L(\tau_{t+1}) \propto p_{t+1}(\mathcal{O}_{\tau_{t+1}} \mid U)$, which maps the predicted trajectory $\tau_{t+1}$ to a safety-informed weight. 
A standard formulation for this likelihood is $L(\tau_{t+1}) = \exp(-\alpha C(\tau_{t+1}))$ where $\alpha > 0$ is a temperature parameter and $C(\tau_{t+1})$ is the trajectory cost. 
This exponential form ensures that trajectories with higher costs, those nearing or entering the failure set $\mathcal{L}$, are exponentially less likely to be represented in the posterior. 
With this definition, \eqref{eq:q_star} becomes
\begin{equation} \label{eq:q_star_final}
    q^*=\arg\min_{q\in Q}\bigl\{-\mathbb{E}_q\left[\log L(\tau_{t+1})\right]
+D_{KL}(q(U)\Vert p_{t+1}(U))\bigr\},
\end{equation}
which is the variational formulation used in sampling-based MPC.

\begin{algorithm}
\caption{SV-MPC~\cite{lambert2020stein} for Safety Filtering} \label{alg:2}
\begin{algorithmic}[1]
\Given{Trajectory cost function $C(\cdot)$, kernel $k(\cdot,\cdot)$, number of particles $m$, number of iterations $iters$, inverse-temperature $\alpha$, step-size $\eta$}
\Input{Predicted state $\mathbf{x}_{t+1}$}
\Output{Safety-conditioned posterior $\tilde q_{t+1}(U)$}
\State \text{Initialize }$\tilde q_{t+1}(U)$\text{ and sample }$\{U\}_{i=1}^m\sim\tilde q_{t+1}(U)$
\For{$iter \leftarrow 1$ to $iters$}
    \For{$i\leftarrow 1$ to $m$ \textbf{in parallel}}
        \State $\nabla_{U^i}\log p_{t+1}(U^i|\mathcal{O}_{\tau^i}=1)$
        \Statex \hspace{3.0em}$=\nabla_{U^i}\log \mathbb{E}[\exp(-\alpha C(\tau^i))]+\nabla_{U^i}\log \tilde q_{t+1}(U^i)$
    \EndFor
    \For{$i\leftarrow 1$ to $m$ \textbf{in parallel}}
        \State $\Delta U^i$
        \Statex \hspace{3.0em}$\leftarrow \frac{1}{m}\sum\limits_{j=1}^m k(U^j,U^i)\nabla_{U^j}\log p_{t+1}(U^i|\mathcal{O}_{\tau^i}=1)$
        \Statex \hspace{4.0em} $+\nabla_{U^j}k(U^j,U^i)$
        \State $U^i\leftarrow U^i+\eta\Delta U^i$
    \EndFor
\EndFor
    \State $w^i \leftarrow \exp(-\alpha C(\tau^i))$
    \State $w^i \leftarrow \frac{w^i}{\sum_{j=1}^m w^j}$
    \State \Return $\tilde q_{t+1}(U)=\sum_{i=1}^m w^i\mathcal{N}(U|U^i,\Sigma)$
\end{algorithmic}
\end{algorithm}

To solve \eqref{eq:q_star_final}, we adapt SV-MPC~\cite{lambert2020stein}, which uses Stein variational gradient descent (SVGD)~\cite{liu2016stein} to construct a particle-based approximation of the posterior. 
The procedure is summarized in Algorithm~\ref{alg:2}. 

Specifically, the distribution $q\in Q$ is represented by $m$ particles $\{U^i\}^m_{i=1}$.
This particle-based representation allows the approximation to capture complex, potentially multimodal structures that cannot be represented by unimodal Gaussian distributions.
SVGD iteratively updates each particle according to line 6 of Algorithm~\ref{alg:2}.
The update consists of two components: the kernel $k(\cdot,\cdot)$ acts as a similarity weight that enables particles to share gradient information, while its derivative $\nabla_{U^j}k(U^j,U^i)$ introduces a repulsive force that prevents particles from collapsing to a single mode.
Together, these terms allow the particles to ``spread out'' and approximate the target posterior more effectively.
We use the radial basis function (RBF) kernel $k(U,U')=\exp(-\frac{1}{h}\Vert U-U'\Vert_2^2)$, where $h=\text{med}^2/\log m$ and {med} denotes the median pairwise particle distance. This mechanism is critical for achieving comprehensive coverage of the safe control space; by maintaining a diverse set of candidate trajectories, the filter can simultaneously capture multiple, potentially disjoint, safe modes.

At the beginning of Algorithm~\ref{alg:2}, $m$ particles are initialized from a zero-mean Gaussian distribution (line 1).
The algorithm then computes the gradient of the log posterior for each particle (lines 3--4),
which provides the update direction for minimizing the objective in~\eqref{eq:q_star_final}.
The particles are subsequently updated (lines 5--7).
The second term $\nabla_{U^j}k(U^j,U^i)$ in line 6 acts as a repulsive force between particles, enabling the approximation to capture multimodality.
After the updates, the particle weights $w_i$ are computed based on the trajectory cost (lines 8--9).
The algorithm then constructs a Gaussian mixture $\tilde q(U)=\sum^m_{i=1}w_i\mathcal{N}(U|U^i,\Sigma)$, where $\mathcal{N}(U|U^i,\Sigma)$ denotes a Gaussian density with mean $U^i$ and covariance $\Sigma$ (line 10). 
The resulting $\tilde q$ is used as the sampling distribution in Algorithm~\ref{alg:1}.

\subsection{Safety Guarantee} \label{subsec_safety-guarantee}

We now establish the safety guarantee of the proposed filter. 

\begin{proposition} \label{prop:1}
    Let $\mathbf{x}_t$ be the current state.
    If $\mathbf{u}_{t}^\mathrm{safe}=\mathbf{u}_{t}^\mathrm{nom}$, then the proposed filter guarantees safety at time $t$.
\end{proposition}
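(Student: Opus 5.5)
The proof will go straight through the branch structure of Algorithm~\ref{alg:1}, unpacked against the definition of safety in Section~\ref{sec:prob}. The first step is to identify when $\mathbf{u}^\mathrm{safe}_t=\mathbf{u}^\mathrm{nom}_t$ is output. This happens in line~7, which executes exactly when the minimum cost satisfies $C(\tau^{i^*}_{t+1})<0$. The trivial coincidence $\bar{\mathbf{u}}_t=\mathbf{u}^\mathrm{nom}_t$ in the else-branch needs separate mention. I would handle it by reading the hypothesis as ``the filter did not intervene,'' i.e.\ line~7 was taken. Alternatively it can be covered by the backup argument below, since then $\mathbf{x}_{t+1}$ coincides with the state reached under the backup input.

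Next I will show that $C(\tau^{i^*}_{t+1})<0$ gives a safe trajectory from $\mathbf{x}_{t+1}=f(\mathbf{x}_t,\mathbf{u}^\mathrm{nom}_t)$. By the cost definition \eqref{eq:cost}, $\max_{k\in[0,H]}\{-l(\mathbf{x}^{i^*}_{t+1+k})\}<0$. Hence $l(\mathbf{x}^{i^*}_{t+1+k})>0$ for every $k\in[0,H]$. By \eqref{eq:failure_set}, this means $\mathbf{x}^{i^*}_{t+1+k}\notin\mathcal{L}$ for all $k$. The states $\mathbf{x}^{i^*}_{t+1+k}$ are exactly the rollout of $U^{i^*}_{t+1}$ from $\mathbf{x}_{t+1}$ under the deterministic dynamics $f$ (line~4). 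By \eqref{eq:safety}, therefore, $\mathcal{O}_{\tau_{t+1}}=1$ for $\tau_{t+1}=(X^{i^*}_{t+1},U^{i^*}_{t+1})$.

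Taking $U_{t+1}:=U^{i^*}_{t+1}$ then gives the required existential witness. Since $\mathbf{u}^\mathrm{safe}_t=\mathbf{u}^\mathrm{nom}_t$, the state $\mathbf{x}_{t+1}=f(\mathbf{x}_t,\mathbf{u}^\mathrm{safe}_t)$ is the same state used in the rollout. So the safety property holds at time $t$, and $U^\mathrm{safe}_{t+1}$ is set to this witness.

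The argument is essentially definitional. The only point needing care is the matching of indices: the horizon $k\in[0,H]$ in \eqref{eq:cost} must line up with the one in \eqref{eq:safety}, and the predicted state $\mathbf{x}_{t+1}$ must be the true next state. The latter uses the determinism of $f$ and the absence of model mismatch.
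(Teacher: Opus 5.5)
Your proposal is correct and matches the paper's proof: non-intervention means line 7 of Algorithm~\ref{alg:1} ran with negative minimum cost, so by \eqref{eq:failure_set} and \eqref{eq:cost} the stored sequence $U^{i^*}_{t+1}$, rolled out from $\mathbf{x}_{t+1}=f(\mathbf{x}_t,\mathbf{u}^\mathrm{nom}_t)$, is a safe witness; your reading of the hypothesis as ``the filter did not intervene'' is the same reading the paper uses implicitly. Drop the ``alternatively'' remark about the else-branch case $\bar{\mathbf{u}}_t=\mathbf{u}^\mathrm{nom}_t$: the proposal never gives that backup argument, and covering that case would need the stored $U^\mathrm{safe}_t$ to be safe and extendable via the invariant-set assumption of Proposition~\ref{prop:2}, which Proposition~\ref{prop:1} does not assume.
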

\begin{proof}
    If $\mathbf{u}_{t}^\mathrm{safe}=\mathbf{u}_{t}^\mathrm{nom}$, then by Algorithm~\ref{alg:1}, the filter has stored a control sequence $U_{t+1}^\mathrm{safe}$ with negative cost.
    By \eqref{eq:failure_set} and \eqref{eq:cost}, the trajectory induced by $U_{t+1}^\mathrm{safe}$ from $\mathbf{x}_{t+1}$ is safe.
    Therefore, the proposed filter guarantees safety at time $t$.
\end{proof}


However, this argument does not directly apply when $\mathbf{u}_{t}^\mathrm{safe}\ne\mathbf{u}_{t}^\mathrm{nom}$, i.e., when the nominal input is overridden.
In this case, the filter applies the first input of $U_{t}^\mathrm{safe}$ and constructs $U_{t+1}^\mathrm{safe}$ by shifting $U_{t}^\mathrm{safe}$ by one step, as in line 9 of Algorithm~\ref{alg:1}.
To guarantee safety at the next timestep, we must ensure that the shifted sequence can be safely extended.
To this end, we define a \textit{safe control invariant set}.


\begin{definition}[Safe control invariant set] \label{def:1}
    A set $\mathcal{C}\subseteq\mathcal{X}$ is a \textit{safe control invariant set} if $\mathbf{x}\notin\mathcal{L},\;\forall\mathbf{x}\in\mathcal{C}$, and for every $\mathbf{x}\in\mathcal{C}$, there exists a control input $\mathbf{u}^{\mathrm{inv}}(\mathbf{x})\in\mathcal{U}$ such that $f(\mathbf{x},\mathbf{u}^{\mathrm{inv}})\in\mathcal{C}.$
\end{definition}

For many robotic systems, a decelerating sequence can drive the system to a known safe control invariant set.
In the single-robot experiment in Section~\ref{subsec_dubins}, candidate backup sequences are sampled from control sequences that bring the robot to a complete stop at the terminal timestep $H$.
Once the robot has stopped, the shifted sequence is extended by appending a terminal input that maintains the stopped state, so that a safe control sequence exists at all subsequent timesteps.
This motivates Proposition~\ref{prop:2}.

\begin{proposition} \label{prop:2}
    Let the sampling space be restricted to control sequences whose terminal state lies in $\mathcal{C}$. Suppose there exists a known invariance control law $\mathbf{u}^{\mathrm{inv}}(\mathbf{x})$ such that for any $\mathbf{x} \in \mathcal{C}$, the next state remains in $\mathcal{C}$.
    
    If the filter does not intervene at $t=0$, i.e., $\mathbf{u}_0^{\mathrm{safe}}=\mathbf{u}_0^{\mathrm{nom}}$, then the filter guarantees safety at all timestep $t\ge0$.
\end{proposition}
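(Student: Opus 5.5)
The plan is to prove by induction on $t$ a stronger invariant: at the start of every timestep $t\ge1$, the stored sequence $U^\mathrm{safe}_t=(\bar{\mathbf{u}}_t,\dots,\bar{\mathbf{u}}_{t+H-1})$, rolled out from the actual state $\mathbf{x}_t$, produces a trajectory $\tau_t$ with $\mathcal{O}_{\tau_t}=1$ whose terminal state $\mathbf{x}_{t+H}$ lies in $\mathcal{C}$. Safety of the filter at time $t$ then follows immediately, because $U^\mathrm{safe}_{t+1}$ is always a sequence that is safe from $\mathbf{x}_{t+1}=f(\mathbf{x}_t,\mathbf{u}^\mathrm{safe}_t)$.

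\textbf{Base case.} At $t=0$ the hypothesis gives $\mathbf{u}^\mathrm{safe}_0=\mathbf{u}^\mathrm{nom}_0$. Hence Algorithm~\ref{alg:1} stored $U^\mathrm{safe}_1=U^{i^*}_1$ with $C(\tau^{i^*}_1)<0$. Proposition~\ref{prop:1} shows that this trajectory from $\mathbf{x}_1=f(\mathbf{x}_0,\mathbf{u}^\mathrm{nom}_0)$ is safe. Since the sampling space is restricted, its terminal state lies in $\mathcal{C}$, so the invariant holds at $t=1$.

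\textbf{Inductive step.} Assume the invariant holds at time $t$. There are two cases.
\begin{itemize}
\item If the filter does not intervene, the argument of Proposition~\ref{prop:1} together with the sampling restriction gives the invariant at $t+1$ directly.
\item If it intervenes, the applied input is $\bar{\mathbf{u}}_t$, so the true next state $\mathbf{x}_{t+1}=f(\mathbf{x}_t,\bar{\mathbf{u}}_t)$ is exactly the second state of the stored trajectory $\tau_t$. Determinism of $f$ is essential here. The shifted sequence $(\bar{\mathbf{u}}_{t+1},\dots,\bar{\mathbf{u}}_{t+H-1})$ reproduces states $\mathbf{x}_{t+2},\dots,\mathbf{x}_{t+H}$ of $\tau_t$, all outside $\mathcal{L}$, and ends at $\mathbf{x}_{t+H}\in\mathcal{C}$. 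The appended terminal input is $\mathbf{u}^{\mathrm{inv}}(\mathbf{x}_{t+H})$, which yields $\mathbf{x}_{t+H+1}\in\mathcal{C}$. By Definition~\ref{def:1} this state is not in $\mathcal{L}$, so the new length-$H$ trajectory is safe and terminates in $\mathcal{C}$.
\end{itemize}
This closes the induction.

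The main obstacle is bookkeeping rather than depth. I must state the invariant precisely: the stored sequence is safe \emph{from the realized state}, not merely from a predicted one. I must also check that in the intervention branch the realized state coincides with the predicted one, even though $\mathbf{x}_{t+1}$ in line~1 of Algorithm~\ref{alg:1} was computed with $\mathbf{u}^\mathrm{nom}_t$. The resolution is that the safety check is performed along the backup rollout, not the nominal one. I would also make explicit the reading of ``sampling space restricted'' as the requirement that every accepted $U^{i^*}_{t+1}$ has terminal state in $\mathcal{C}$, since the terminal-in-$\mathcal{C}$ part of the invariant relies on it.
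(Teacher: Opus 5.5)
Your proposal is correct and follows the paper's proof essentially step for step. Both argue by induction on $t$, with the invariant that the stored backup sequence is safe from the realized state and terminates in $\mathcal{C}$, and both split into the non-intervention case (Proposition~\ref{prop:1} plus the sampling restriction) and the intervention case (shift the sequence and append $\mathbf{u}^{\mathrm{inv}}$); yours is only more explicit, for instance in using determinism to match the realized state to the backup rollout and in invoking Definition~\ref{def:1} to place the appended terminal state outside $\mathcal{L}$.
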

\begin{proof}
    We prove the claim by induction on $t$.
    By Proposition~\ref{prop:1}, the condition $\mathbf{u}_0^{\mathrm{safe}}=\mathbf{u}_0^{\mathrm{nom}}$ implies that the filter guarantees safety at $t=0$. Furthermore, the corresponding safe control sequence $U_1^{\mathrm{safe}}$ steers the terminal state into $\mathcal{C}$ by the construction of the sampling space.
    
    For the induction step, assume that at time $t$, the filter guarantees safety with a safe control sequence $U_{t+1}^\mathrm{safe}=(\bar{\mathbf{u}}_{t+1},\cdots, \bar{\mathbf{u}}_{t+H})$ whose terminal state $\mathbf{x}_{t+H+1}$ lies in $\mathcal{C}$.
    
    At time $t+1$, if $\mathbf{u}_{t+1}^\mathrm{safe}=\mathbf{u}_{t+1}^\mathrm{nom}$, by Proposition~\ref{prop:1}, the filter guarantees safety at time $t+1$ with a safe control sequence whose terminal state lies in $\mathcal{C}$ due to the sampling space. Otherwise, the filter overrides $\mathbf{u}_{t+1}^\mathrm{nom}$ and applies $\mathbf{u}_{t+1}^{\mathrm{safe}}=\bar{\mathbf{u}}_{t+1}$. 
    Since the terminal state induced by $U_{t+1}^\mathrm{safe}$ lies in $\mathcal{C}$, we can construct a candidate safe sequence $U_{t+2}^{\mathrm{safe}}=(\bar{\mathbf{u}}_{t+2},\cdots, \bar{\mathbf{u}}_{t+H}, \mathbf{u}^{\mathrm{inv}}(\mathbf{x}_{t+H+1}))$.
    Hence, there exists a safe control sequence $U_{t+2}^\mathrm{safe}$ that keeps the terminal state in $\mathcal{C}$, and the filter guarantees safety at time $t+1$. 
\end{proof}


\begin{remark}
    To ensure all-time safety with a braking-based backup, the horizon $H$ must exceed the stopping time under maximum braking. This allows the filter to drive the system to a safe stationary state within the horizon.
\end{remark}

\begin{remark}
    From the CBF perspective, the cost function can also be defined as
    $$C(\tau^i_{t+1}):=\max_{k\in[0,H]}\left\{-l(\mathbf{x}^i_{t+k+1}) + (1-\gamma)l(\mathbf{x}^i_{t+k})\right\},$$
    where $\gamma\in(0,1)$ denotes the decay rate. 
    This formulation also guarantees safety, often yields smoother control actions, and is closely related to MPC-CBF in \cite{zeng2021safety}.
\end{remark}

\subsection{Restrictiveness}

The proposed safety filter intervenes when the minimum cost among the sampled trajectories is nonnegative. 
Ideally, the filter should intervene only when every possible control sequence from $\mathbf{x}_{t+1}=f(\mathbf{x}_t,\mathbf{u}^\mathrm{nom}_t)$ is unsafe, that is, when the minimum cost over the entire control space is nonnegative. 
Since this quantity cannot be evaluated exactly, the filter relies on $N$ sampled sequences to assess the safety of the nominal input.
We derive a probabilistic guarantee on the restrictiveness induced by this finite-sample approximation, based on the scenario approach~\cite{campi2009scenario}.

\begin{theorem}\label{thm:1}
    Given a restrictiveness parameter $\epsilon\in(0,1)$ and a confidence parameter $\beta\in(0,1)$, let the sample size $N$ satisfy 
    \begin{equation} \label{eq:N}
        N\ge\frac{2}{\epsilon}\left(\ln{\frac{1}{\beta}+1}\right).
    \end{equation}
    If the filter in Algorithm~\ref{alg:1} intervenes, i.e., $\mathbf{u}_t^\mathrm{safe}\ne \mathbf{u}_t^\mathrm{nom}$, then with probability at least $1-\beta$, the probability of sampling a safe control sequence $U_{t+1}$ from $\tilde q_{t+1}$ is at most $\epsilon$:
    \begin{equation} \label{eq:thm1}
        \Pr_{U_{t+1}\sim\tilde q_{t+1}}\!\left(\mathcal{O}_{\tau_{t+1}}=1\right)\le\epsilon.
    \end{equation}
\end{theorem}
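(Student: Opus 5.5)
The plan is to recast the intervention test in Algorithm~\ref{alg:1} as a scenario program with a single scalar decision variable, and then invoke the scenario-approach bound of~\cite{campi2009scenario} with $d=1$. This explains the ``$+1$'' in \eqref{eq:N}.

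\textbf{Setup.} Fix $\mathbf{x}_t$ and $\mathbf{u}^\mathrm{nom}_t$. Then $\mathbf{x}_{t+1}$ and the sampling distribution $\tilde q_{t+1}$ produced by Algorithm~\ref{alg:2} are fixed. Conditionally on them, the samples $U^1_{t+1},\dots,U^N_{t+1}$ are i.i.d.\ from $\tilde q_{t+1}$, so the trajectory costs $C(\tau^i_{t+1})$ from \eqref{eq:cost} are i.i.d.\ scalar random variables. Consider the program
\begin{equation*}
\gamma^*_N=\arg\min_{\gamma\in\mathbb{R}}\ \gamma\quad\text{s.t.}\quad \gamma\ge C(\tau^i_{t+1}),\ i=1,\dots,N .
\end{equation*}
This is a convex scenario program with $d=1$. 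Its solution is unique and equals $\max_i C(\tau^i_{t+1})$. That is the wrong direction for our test, so I would instead use the constraints $\gamma\le C(\tau^i_{t+1})$ and maximize $\gamma$. This yields $\gamma^*_N=\min_i C(\tau^i_{t+1})=C(\tau^{i^*}_{t+1})$, exactly the quantity tested in line 6.

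\textbf{Violation probability.} The violation probability of this solution is $V(\gamma^*_N)=\Pr_{U\sim\tilde q_{t+1}}\bigl(C(\tau_{t+1})<\gamma^*_N\bigr)$. The scenario theorem of~\cite{campi2009scenario} states that $\Pr^N\{V(\gamma^*_N)>\epsilon\}\le\beta$ whenever $N\ge\frac{2}{\epsilon}(\ln\frac1\beta+d)$. With $d=1$ this is exactly \eqref{eq:N}.

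\textbf{Link to intervention.} The filter intervenes iff $\gamma^*_N\ge0$. Because a trajectory is safe iff its cost is negative, on the intervention event we have
\begin{equation*}
\{\mathcal{O}_{\tau_{t+1}}=1\}=\{C(\tau_{t+1})<0\}\subseteq\{C(\tau_{t+1})<\gamma^*_N\}.
\end{equation*}
Hence $\Pr_{\tilde q_{t+1}}(\mathcal{O}_{\tau_{t+1}}=1)\le V(\gamma^*_N)$. Therefore the event ``the filter intervenes and \eqref{eq:thm1} fails'' is contained in $\{V(\gamma^*_N)>\epsilon\}$, whose probability is at most $\beta$. This is the precise meaning of ``with probability at least $1-\beta$'' in the statement.

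\textbf{Cross-check.} As a sanity check, and as a fallback if the scenario-theorem hypotheses are awkward to verify, I would give a direct argument. Let $p=\Pr_{\tilde q_{t+1}}(\mathcal{O}_{\tau_{t+1}}=1)$. If $p>\epsilon$, then the probability that all $N$ i.i.d.\ samples are unsafe is
\begin{equation*}
(1-p)^N<(1-\epsilon)^N\le e^{-\epsilon N}.
\end{equation*}
Under \eqref{eq:N} we have $\epsilon N\ge2(\ln\frac1\beta+1)\ge\ln\frac1\beta$, so this probability is at most $\beta$.

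\textbf{Main obstacle.} The difficult step is not the bound itself but its interpretation. The claim must be read as a bound on the joint event over the sample draw, namely $\Pr^N\{\text{intervene}\wedge p>\epsilon\}\le\beta$, rather than as a statement conditional on intervention. Conditioning on intervention could inflate the probability when interventions are rare, so I would state this reading explicitly.

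A second point to make explicit is that $\tilde q_{t+1}$ must be treated as fixed, independent of the $N$ filter samples. This requires that the SV-MPC particles of Algorithm~\ref{alg:2} are computed before, and separately from, the draws in line 3 of Algorithm~\ref{alg:1}. The i.i.d.\ assumption, and hence the whole argument, depends on this.

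Finally, the scenario-theorem route needs its standing assumptions (existence and uniqueness of the solution) for the one-dimensional program. These hold trivially because the minimum of $N$ reals is well defined, with ties in the argmin irrelevant to $\gamma^*_N$.
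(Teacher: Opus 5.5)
Your proposal is correct and follows the paper's proof. You use the same one-dimensional scenario program $\max_g g$ subject to $g\le C(\tau^i_{t+1})$, invoke Theorem~1 of~\cite{campi2009scenario} with $d=1$, and apply the inclusion $\{C(\tau_{t+1})<0\}\subseteq\{C(\tau_{t+1})<g^*\}$ on the intervention event; you rightly use $g^*\ge 0$ there, where the paper writes $g^*>0$. Your direct check $(1-p)^N<(1-\epsilon)^N\le e^{-\epsilon N}\le\beta$ is also valid and shows that $N\ge\frac{1}{\epsilon}\ln\frac{1}{\beta}$ already suffices. Your joint-event reading $\Pr^N\{\text{intervene}\wedge p>\epsilon\}\le\beta$, with $\tilde q_{t+1}$ fixed before the $N$ draws, is the right way to make ``with probability at least $1-\beta$'' precise.
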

\begin{proof}
    If the filter intervenes, then the minimum sampled cost satisfies $\min_i C(\tau^i_{t+1})\ge0$. 
    The problem of finding this minimum cost can be formulated as
    \begin{equation*}
        \begin{aligned}
            &\max_{g\in\mathbb{R}}\; g \\
            &\text{s.t.}\quad g\le C(\tau^i_{t+1}),\ i=1,\cdots,N,
        \end{aligned}
    \end{equation*}
    which is the same form as the scenario optimization problem in~\cite{campi2009scenario}. Therefore, by Theorem 1 in~\cite{campi2009scenario}, if the sample size $N$ satisfies \eqref{eq:N}, then $\Pr_{U_{t+1}\sim\tilde q_{t+1}}\!\left(g^*-C(\tau_{t+1})>0\right)\le\epsilon$.
    If the filter intervenes, then $g^*>0$. Hence,
    $$\Pr_{U_{t+1}\sim\tilde q_{t+1}}\!\left(C(\tau_{t+1})<0\right)\le \Pr_{U_{t+1}\sim\tilde q_{t+1}}\!\left(C(\tau_{t+1})<g^*\right) \le \epsilon.$$
    By \eqref{eq:safety} and \eqref{eq:cost}, this implies \eqref{eq:thm1}.
\end{proof}

\begin{remark}
    The probabilistic guarantee in Theorem~\ref{thm:1} holds for any sampling distribution $\tilde q_{t+1}$.
    However, the practical utility of the filter depends on the choice of this distribution. 
    A more accurate approximation of the safety-conditioned posterior, achieved in this paper through the multimodal coverage of SV-MPC, reduces unnecessary interventions by effectively exploring the safe control space.
\end{remark}

\section{EXAMPLES} \label{sec:exp}

In this section, we evaluate the proposed safety filter in two examples. 
We first consider a single-robot obstacle avoidance task to compare the sampling distributions induced by SV-MPC and CEM-based safety filters~\cite{park2025safety,feng2025words}, and to examine the restrictiveness of the resulting filters.
We then apply the proposed method to a multi-vehicle intersection scenario using a reinforcement learning-based nominal controller.

\subsection{Single-Robot Obstacle Avoidance} \label{subsec_dubins}

We demonstrate the proposed safety filter on a system with the following dynamics:
\begin{equation*}
    \begin{aligned}
        x_{t+1} &= x_t + v \cos(\theta_t) \Delta t,&&
        y_{t+1} = y_t + v \sin(\theta_t) \Delta t,\\
        \theta_{t+1} &= \theta_t + \omega_t \Delta t,&&
        v_{t+1} = v_t + a_t \Delta t.
    \end{aligned}
\end{equation*}
where $x$ and $y$ denote the robot position, $\theta$ is the heading angle, and $v\in[0,1]$ is the speed. The control inputs are the angular velocity $\omega\in[-1.5,1.5]$ and the acceleration $a\in[-1,1]$, with a timestep of $\Delta t=0.1$.

\begin{figure}[thpb]
  \centering
  \includegraphics[width=\columnwidth]{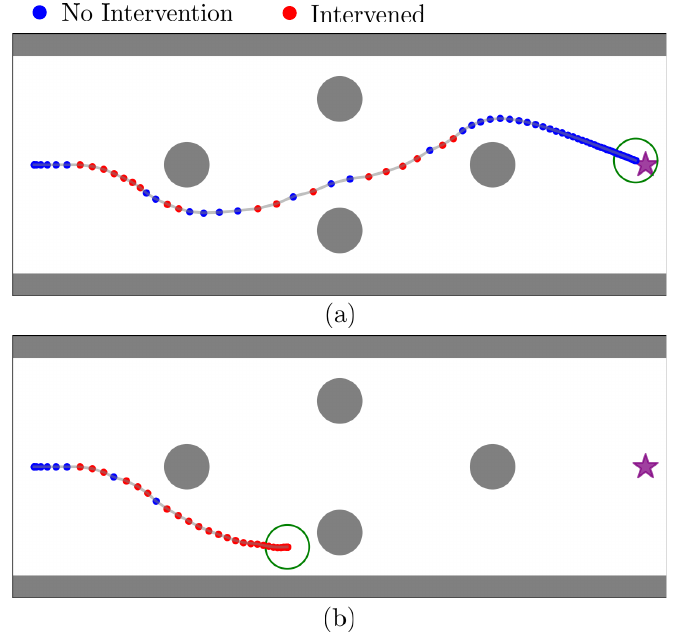}
  \caption{Comparison of closed-loop trajectories obtained under filters based on (a) SV-MPC and (b) CEM~\cite{park2025safety,feng2025words}.}
  \label{fig:1}
\end{figure}

As shown in Fig.~\ref{fig:1}, the robot is tasked with reaching the goal position, marked by a purple star, while avoiding the gray obstacles.
The robot is depicted as a hollow green circle.
To evaluate safety, we define the level function as
$$l(\mathbf{x})=1000\left(d_{\min}-(r_{\mathrm{robot}}+r_{\mathrm{obs}})\right),$$
which is used to compute the cost $C(\tau^i)$. Here, $d_{\min}$ denotes the minimum distance from the robot position to the obstacle centers, and $r_{\mathrm{robot}}=0.1$ and $r_{\mathrm{obs}}=0.1$ are the radii of the robot and the obstacles, respectively.
The minimum-distance operation leads to a nonsmooth and nonconvex level function, which motivates the use of the proposed sampling-based filter.
We apply the filter with $N=757$ and $H=20$. 
The sample size $N=757$ is determined from \eqref{eq:N} with $\beta=10^{-16}$ and $\epsilon=0.1$. 
For the SV-MPC procedure, we set $m=12$, $iters=5$, $\alpha=0.1$, and $\eta=0.25$.

Fig.~\ref{fig:1} illustrates the experimental results. 
Trajectories are shown in red when the filter intervenes and in blue otherwise. 
As the nominal controller, we use an MPC that seeks only to reach the goal without considering safety constraints.
As shown in Fig.~\ref{fig:1}(a), when the filter uses SV-MPC to construct the sampling distribution, the robot successfully reaches the goal along a safe trajectory. 
In contrast, when the filter uses CEM, the robot becomes stuck in a deadlock and fails to reach the goal, as shown in Fig.~\ref{fig:1}(b).

\begin{figure}[thpb]
  \centering
  \includegraphics[width=\columnwidth]{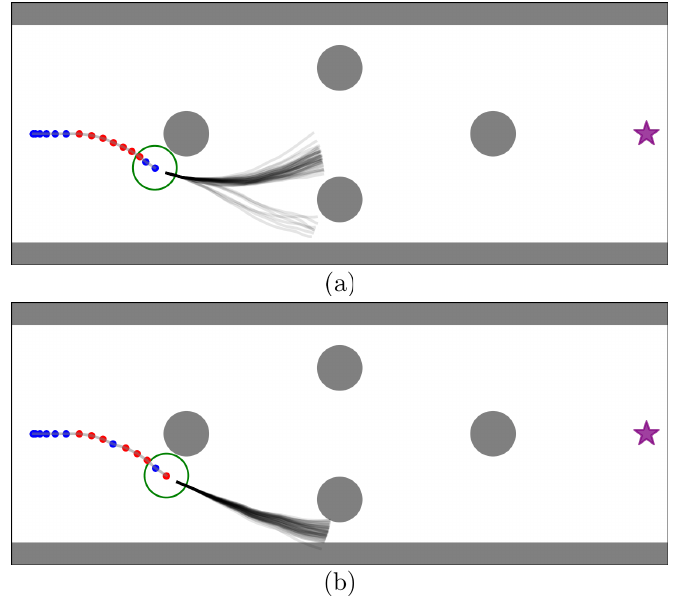}
  \caption{Comparison of samples generated by the filter using (a) SV-MPC and (b) CEM~\cite{park2025safety,feng2025words} to construct the sampling distribution.}
  \label{fig:2}
\end{figure}

In Fig.~\ref{fig:2}, we visualize the samples generated by the two filters to better understand this difference.
Fig.~\ref{fig:2}(a) shows samples generated using SV-MPC, which captures the multimodal structure of the safety-conditioned posterior and thereby enables broader exploration of safe trajectories.
Fig.~\ref{fig:2}(b) shows samples generated using a CEM-based distribution constructed with five iterations, which is the same number of iterations used in SV-MPC.
This procedure is similar to the distribution update in Algorithm 1 of~\cite{feng2025words}.
Because this distribution is unimodal Gaussian, the resulting samples concentrate around a single mode, leading to limited diversity.
As a result, depending on initialization and sampling, the distribution may converge to an unfavorable direction and become stuck in the deadlock shown in Fig.~\ref{fig:1}(b).

\begin{table}
\caption{Comparison of Filter Restrictiveness} \label{tab:1}
\begin{center}
\begin{tabular}{|c|c|c|>{\centering\arraybackslash}p{1cm}|c|c|}
\hline
\multicolumn{2}{|c|}{Parameters} & \multicolumn{2}{c|}{Num. Intervened} & \multicolumn{2}{c|}{Max. Safe Sample Rate}\\
\hline
$\epsilon$ &$N$ & SV-MPC & CEM & SV-MPC & CEM\\
\hline
0.2 & 379 & 4717 & 5794 & $1.4\times10^{-2}$ & $1.0\times10^{-2}$\\
\hline
0.1 & 757 & 4672 & 5730 & $3.9\times10^{-3}$ & $3.6\times10^{-3}$\\
\hline
0.01 & 7569 & 4549 & 5601 & $8.4\times10^{-4}$ & $4.3\times10^{-4}$\\
\hline
\end{tabular}
\end{center}
\end{table}

We also examine the restrictiveness of the filters.
To this end, we test whether the filter intervenes for all feasible $(x,y)$ positions in the same environment as in Fig.~\ref{fig:1} and Fig.~\ref{fig:2}, while fixing $v=1$ and $\theta=0$.
The region is discretized on a grid with resolution $0.01\times0.01$, and the number of states where the filter intervenes is reported as \textbf{Num. Intervened} in TABLE~\ref{tab:1}.
For each intervened state, we then draw an additional 100,000 control sequences from the same sampling distribution used by the filter and compute the fraction of safe samples, i.e., samples with negative cost.
The maximum of these fractions over all intervened states is reported as the \textbf{Max. Safe Sample Rate}.
This analysis is performed with $H=20$ and $N=379/757/7569$, which correspond to $\epsilon=0.2/0.1/0.01$, respectively, when $\beta=10^{-16}$. 
The results are summarized in TABLE~\ref{tab:1}.

As shown in TABLE~\ref{tab:1}, the filter based on SV-MPC intervenes less frequently than the filter based on CEM, indicating that SV-MPC better approximates the posterior.
In both methods, the number of interventions decreases as the sample size $N$ increases.
The safe sample rates are smaller than the corresponding restrictiveness parameter $\epsilon$ in all cases,
consistent with the bound $\epsilon$ in Theorem~\ref{thm:1}.

\subsection{Multi-Vehicle Intersection}

This subsection demonstrates the proposed safety filter in a multi-vehicle scenario.
We consider an unsignalized intersection consisting of a single-lane four-way crossing with multiple vehicles as shown in Fig.~\ref{fig:3}. 
Each vehicle is randomly initialized with an entry time, entry location, and destination. To model the longitudinal motion of each vehicle, we employ double integrator dynamics.
The state of vehicle $i$ at time $t$ is defined as
$s_{i,t}:=(x_{i,t}, y_{i,t}, v_{i,t}),$
where $(x,y)$ denotes the position and $v\in[0,10]$ denotes the velocity. 
The overall system state is formed by concatenating the states of all vehicles, i.e., $(s_{1,t}, s_{2,t},s_{3,t},\cdots)$.
The control input for each vehicle is defined as the longitudinal acceleration $a_{i,t}\in[-1.5,1.5]$. 
The paths of all vehicles toward their destinations are predetermined, and the control action affects only their longitudinal motion along these paths. 

As the nominal controller, we adopt the GPT-based Decision Transformer (DT) proposed in~\cite{lee2024gpt}. This offline reinforcement learning architecture treats the control task as a sequence-modeling problem, mapping historical states and desired returns-to-go to optimal actions. The DT is trained to minimize intersection traversal time while simultaneously reducing inter-vehicle collisions.

We apply the proposed safety filter to this nominal controller. 
The level function is defined as
\begin{equation*}
l(\mathbf{x}) =
\begin{cases}
100\,d, & d>0, \\
-100, & \text{otherwise},
\end{cases}
\end{equation*}
where $d=\min(d_{12},d_{23},d_{31})$ denotes the minimum pairwise distance between the vehicle polytopes. 
This piecewise-defined level function is discontinuous, which makes the setting well-suited to the proposed sampling-based approach.
The filter uses $N=757$ samples and a horizon $H=67$.
For SV-MPC, we set $m=12$, $iters=5$, $\alpha=0.1$ and $\eta=0.15$.

\begin{table}
\caption{Collision Rate and Filter Intervention Statistics} \label{tab:2}
\begin{center}
\begin{tabular}{|c|c|c|c|}
\hline
& Nominal & \multicolumn{2}{c|}{Filter Applied} \\
\hline
Num Vehicles & Collision Rate & Intervention Rate & Collision Rate \\
\hline
3 Vehicles & 0.04 & 0.08 & 0.00 \\
\hline
4 Vehicles & 0.12 & 0.21 & 0.00 \\
\hline
5 Vehicles & 0.15 & 0.38 & 0.00 \\
\hline
\end{tabular}
\end{center}
\end{table}

TABLE~\ref{tab:2} summarizes the experimental results for scenarios with 3, 4, and 5 vehicles. 
We randomly select 100 test cases not included in the training dataset to evaluate the effect of the safety filter. 
Without the filter, the nominal controller results in collisions in 4, 12, and 15 cases for 3-, 4-, and 5-vehicle settings, respectively. 
When the filter is applied, all collision cases are eliminated, although it also intervenes in some cases that would not have resulted in collisions.

\begin{figure}[thpb]
  \centering
  \includegraphics[width=\columnwidth]{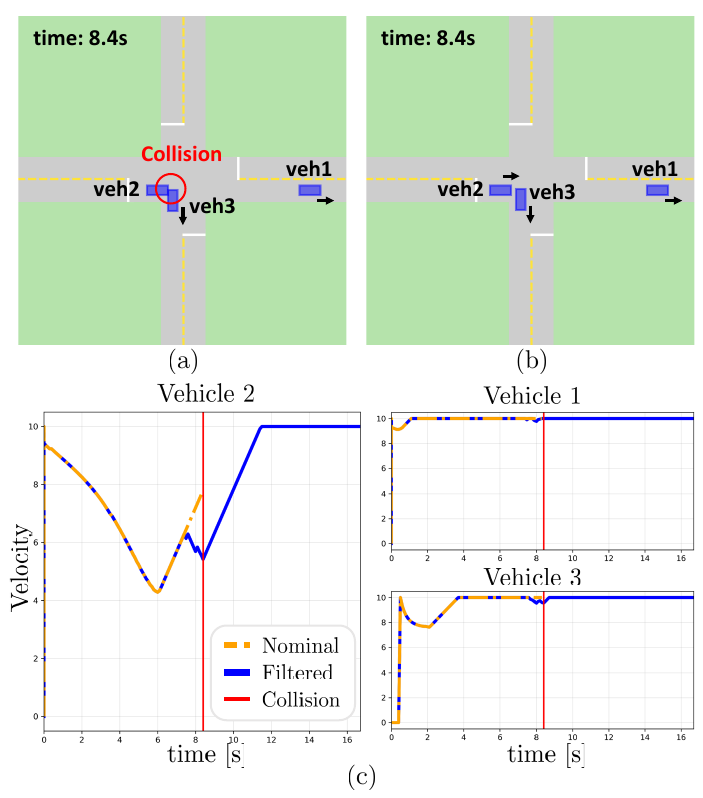}
  \caption{Visualization of a test case. (a) Vehicle 2 collides with vehicle 3 without the filter. (b) With the filter applied, vehicle 2 avoids the collision with vehicle 3. (c) Velocity profiles of the three vehicles.}
  \label{fig:3}
\end{figure}

We visualize an example test case, from the 3-vehicle setting, in which a collision occurs without the filter. 
As shown in Fig.~\ref{fig:3}(a), vehicle 2 collides with vehicle 3 without the filter.
With the filter applied, however, vehicle 2 slows down and avoids the collision, as shown in Fig.~\ref{fig:3}(b).
Fig.~\ref{fig:3}(c) shows the velocity profiles of all three vehicles.


\section{CONCLUSION} \label{sec:conclusion}

In this paper, we propose a sampling-based safety filter.
We establish a finite-sample probabilistic guarantee on restrictiveness via the scenario approach.
By adapting SV-MPC to approximate a safety-conditioned posterior over control sequences, the proposed filter can identify safe samples more effectively and thereby reduce unnecessary interventions.
Experimental results show that, with a sufficiently long horizon, the proposed method eliminates collisions entirely in both single- and multi-vehicle scenarios while being less restrictive than a CEM-based filter~\cite{park2025safety,feng2025words}.

The probabilistic restrictiveness guarantee in this work depends on the sampling distribution and therefore cannot serve as an absolute measure of filter restrictiveness.
In future work, we plan to relate the notion of filter restrictiveness directly to the backward reachable tube.


\bibliographystyle{IEEEtran}
\bibliography{main}

\end{document}